\newcommand{\comment}[1]{} 
\newtheorem{theorem}{Theorem}
\newtheorem{lemma}[theorem]{Lemma}
\newtheorem{problem}{Problem}
\newcounter{quote}
\newcommand\blfootnote[1]{%
  \begingroup
  \renewcommand\thefootnote{}\footnote{#1}%
  \addtocounter{footnote}{-1}%
  \endgroup
}
\begin{document}

\title{On the Number of Order Types in Integer Grids of Small Size}
\author{Luis E. Caraballo\thanks{Departamento de Matemática Aplicada II, Universidad de Sevilla, Sevilla, Spain. \texttt{[lcaraballo|dbanez]@us.es}. } \footnote{Funded by Spanish Government under grant agreement FPU14/04705.}
\and
Jos\'e-Miguel D\'iaz-B\'a\~nez \footnotemark[1] \thanks{Partially supported by Project GALGO (Spanish Ministry of Economy and Competitiveness, MTM2016-
76272-R AEI/FEDER,UE)} 
\and
Ruy Fabila-Monroy\thanks{Departamento de Matem\'aticas, Cinvestav, CDMX, Mexico. \texttt{[rfabila|cmhidalgo]@math.cinvestav.mx}} \footnote{Partially supported by CONACYT (Mexico), grant 253261.} 
\and
Carlos Hidalgo-Toscano\footnotemark[4] \footnotemark[5]
\and
Jes\'us Lea\~nos\thanks{Unidad Académica de Matemáticas, Universidad Aut\'onoma de Zacatecas, Zacatecas, Mexico. \texttt{jleanos@matematicas.reduaz.mx}}
\and 
Amanda Montejano\thanks{UMDI-Juriquilla Facultad de Ciencias, Universidad Nacional Autónoma de México, Querétaro, Mexico. \texttt{amandamontejano@ciencias.unam.mx}}
}


\maketitle

\blfootnote{\begin{minipage}[l]{0.3\textwidth} \includegraphics[trim=10cm 6cm 10cm 5cm,clip,scale=0.15]{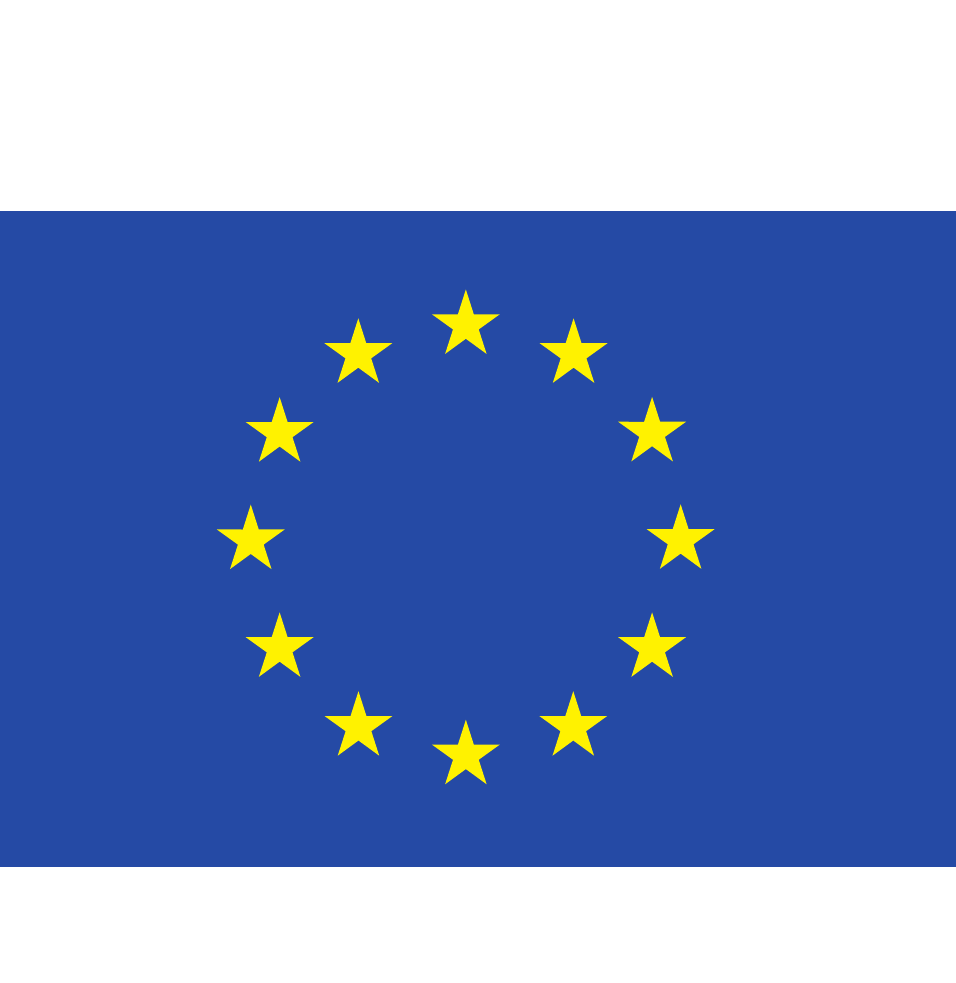} \end{minipage}  \hspace{-3.5cm} \begin{minipage}[l][1cm]{0.82\textwidth}
\vspace{.1cm}
This project has received funding from the European Union's Horizon 2020 research and innovation programme under the Marie Sk\l{}odowska-Curie grant agreement No 734922.
\end{minipage}}

\begin{abstract}
Let $\{p_1,\dots,p_n\}$  and $\{q_1,\dots,q_n\}$ be two sets of $n$ labeled points in general position in the plane. 
We say that these two point sets have the same order type if for every triple of indices $(i,j,k)$, $p_k$ is
above the directed line from $p_i$ to $p_j$ if and only if $q_k$ is
above the directed line from $q_i$ to $q_j$.
In this paper we give the first non-trivial lower bounds on the number
of different order types of $n$ points that can be realized in integer grids of polynomial
size.
\end{abstract}

\section{Introduction}

Let $A$ and $B$ be two arrays of $n$  distinct numbers. 
We say that $A$ and $B$ have the same order type 
if for every pair $i,j$ of different indices we have that  $A[i] < A[j]$ if and only if 
$B[i] < B[j]$. Goodman and Pollack~\cite{multidimSorting} introduced a higher dimensional analogue of this idea.
Let $S:=\{p_1,\dots,p_n\}$ be a set of $n$ labeled points in general position in the plane.
The relationship that $A[i] < A[j]$ is equivalent to $A[i]$ being the left of $A[j]$ in the real line. 
This left-right relationship can be generalized to point sets as follows. For a given triple
$(i,j,k)$ of distinct indices, $p_k$ may be above or below
the directed line from $p_i$ to $p_j$. Two sets of $n$ labeled points
in the plane have the same \emph{order type} if they have these same above-below relationships.\footnote{In the literature, it is more common to consider two points sets as having the same
order type if there is a bijection between them that preserves these above-below
relationships. In this paper we only consider \emph{labeled} order types; 
thus, a relabeling of $S$ usually produces a different order type.}
In dimension $d>2$ this 
is generalized by considering all $(d+1)$ tuples of points of a given labeled point set in $\mathbb{R}^d$.
The first $d$ points of the tuple define an oriented hyperplane and the relationship is whether
the last point is above or below this hyperplane. The order type is defined
also for point sets not in general position. In this case the last point can be
below, above or on the corresponding hyperplane.

Suppose that for each $A[i]$ we are given the number of elements of $A$ that
are to the left of $A[i]$, from this information alone we can sort $A$
and recover the left-right relationships mentioned above. Remarkably, this also holds
for higher dimensions. For every pair of indices $i$ and $j$, let $\lambda(i,j)$ be
the number of elements of $S$ above the directed line from $i$ to $j$. The \emph{$\lambda$-matrix}
of $S$ is the $n\times n$ matrix whose $(i,j)$ entry is equal to $\lambda(i,j)$. Goodman
and Pollack~\cite{multidimSorting} showed that from the $\lambda$-matrix of a point
set one can recover the above-below relationships of its triples. This also holds
in dimension $d>2$: if for a given $n$-point set in $\mathbb{R}^d$, one is given the number of points above the oriented
hyperplane defined by every $d$-tuple of points, one can recover which
points are above which oriented hyperplanes.

The $\lambda$-matrix of  set of $n$ points in the plane can be codified with
$O(n^2 \log n)$ bits. This implies that if $f(n)$ is the number
of different possible order types of a set of $n$ points in general position in the plane
then $f(n) \le \exp(O(n^2 \log n))$. Goodman and Pollack~\cite{upper}
showed that this bound is far away from the real value of $f(n)$. They showed that 
 \[f(n) \le \exp(4(1+O(1/\log n))n \log n).\]
To lower bound $f(n)$, consider the following procedure (see~\cite{upper}). Suppose that 
we want to extend $S$ to an $(n+1)$ point configuration
by adding a point $p_{n+1}$ to $S$.
Consider the line arrangement spanned by all the straight
lines passing through a pair of points in $S$.
It was proved by Zaslavsky~\cite{number_of_cells}  that this line arrangement has 
\[\binom{\binom{n}{2}}{2}+\binom{\binom{n}{2}}{1}+1-n\binom{n-2}{2}\ge \frac{1}{8}n^4,\]
cells. Adding $p_{n+1}$ in different cells of the arrangement produces point sets with
different order types. We may use this argument by starting from $\{p_1, p_2, p_3\}$ 
and iteratively adding the remaining points; at each step we consider the number of different options
that produce different order types. This yields 
\[f(n)\ge \prod_{i=1}^n \frac{1}{8}i^4 = \frac{n!^4}{8^n}=\exp(4(1+O(1/\log n))n \log n),\]
were the last term is obtained by using Stirling's formula. 

The order type of a point configuration abstracts the convexity relationships between
its subsets. 
As a result, for various questions regarding point sets, two point sets having the same
order type are equivalent. However, an arbitrary assignment of ``above'' or ``below''
relationships to triples of indices in $\{1,\dots,n\}$ might not be realizable as the order type
of a labeled set of $n$ points. 

Aichholzer, Aurenhammer and Krasser~\cite{database} have produced a database with a point set 
for each realizable order type of at most $10$ points. Although it is a relatively small value
of $n$, this database has proven to be very useful.
Chazelle asked in 1987 (see~\cite{exponential}): what is the number of bits needed to store a representative
of any given realizable order type of $n$ points? Equivalently, what is the minimum size of an integer grid, so that
it contains a representative of every realizable order type of $n$ points?
Goodman, Pollack and Sturmfels~\cite{exponential,intrinsicSpread} showed that 
there are order types of $n$ points whose every realization with positive integer coordinates has a coordinate
of size greater than  $2^{2^{c_{1}n}}$, for some positive constant $c_1$; they also showed that every order type of $n$ points 
can be realized with positive integer coordinates of size at most $2^{2^{c_{2}n}}$, for some positive
constant $c_2$. 
In the book ``Research Problems in Discrete Geometry''~\cite{research_problems}
by Brass, Moser and Pach, we find the following problem.
\begin{problem}\label{prb:nk}
 For a given constant $\alpha > 0$, what is the number of order types of $n$ points that can be
 represented by integer coordinates smaller than $n^{\alpha}$?
\end{problem}

In this paper we show the first non trivial lower bounds for Problem~\ref{prb:nk}. Let 
$g(n,\alpha)$ be the number of different order types realizable in an integer grid
of size $n^{\alpha}$.

For starters one may ask what is the smallest integer grid in which at least
one order type is realizable. This is equivalent to ask what is the size
of the minimum integer grid so that it contains a set of $n$ points
such that no three of them are collinear. This is known as the \emph{no-three-in-a-line problem}
and was introduced by Dudeny~\cite{no_three} in 1917.

\begin{figure}
\begin{center}
\includegraphics[width=0.5\textwidth]{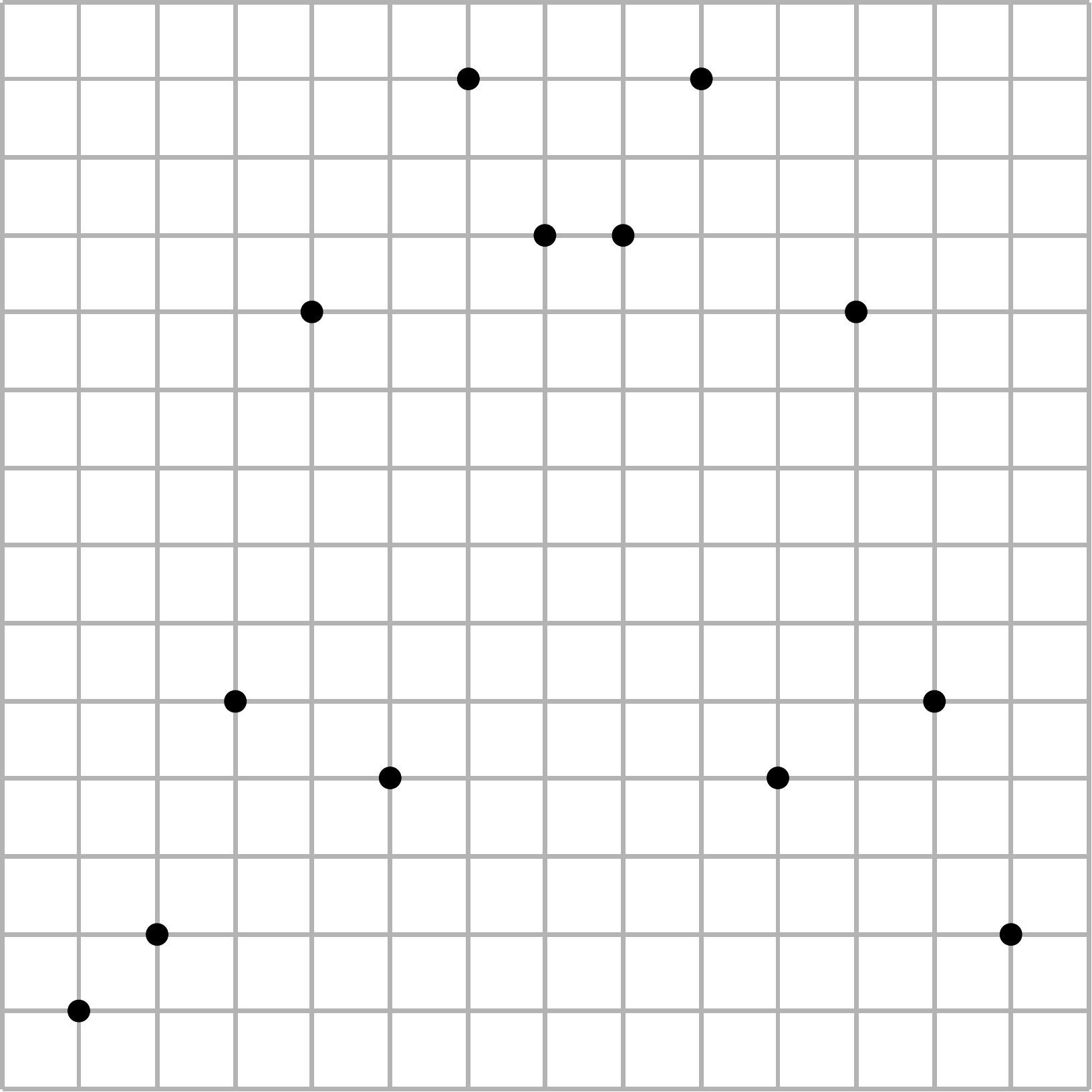}
\end{center}
\caption{$Q_{13}$}\label{fig:Q}
\end{figure}

Erd\H{o}s 
showed (see~\cite{roth}) that if $p$ is a prime then the set \[Q_p:=\{(i, i^2 \mod p): 0 \le i < p \}\]
is in general position. This point set is shown in Figure~\ref{fig:Q} for $p=13$. Therefore, at least one order type can be realized
in integer grids of linear size.

Suppose that $n^{\alpha}$ is such that at least one order type of $n$ points
can be realized in an $n^{\alpha} \times n^{\alpha}$ integer grid. Any permutation of the labels of a point set that
preserves the order type must preserve the clockwise cyclic order of the points in the convex hull. 
Moreover, for every point $p \in S$, the clockwise cyclic order by angle
of the points of $S \setminus \{p\}$ around $p$, must be also be preserved. These two observations together
imply that at least $(n-1)!$ other
different order types are realizable in this grid. By Stirling's approximation,  this at least
\[\exp(n \log n-n+O(\log n)).\] As a result we consider a meaningful lower bound for $g(n,\alpha)$
to be of \[\exp(c\cdot n \log n)\]
for some $c>1$. In this paper, in Section~\ref{sec:cons}, we prove the following lower bounds.

\begin{theorem}\label{thm:a2}
 If $\alpha >2$  then \[g(n,\alpha) \ge \exp\left( 2n \log n-O(n\log \log n)\right). \]
\end{theorem}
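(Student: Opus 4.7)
The plan is to construct a large family of labeled $n$-point configurations in $[n^\alpha]^2$ with pairwise distinct order types by fixing the $x$-coordinates and iteratively choosing the $y$-coordinates. Concretely, I set $p_i=(i,y_i)$ with $y_i\in\{1,\dots,n^\alpha\}$ and build the sequence $y_1,y_2,\dots,y_n$ one step at a time, bounding from below the number of viable extensions available at each step.

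The starting point is that, since the $x$-coordinates are fixed, the labeled order type of $(p_1,\dots,p_k)$ extending a given prefix $(y_1,\dots,y_{k-1})$ is determined by the signs of $y_k-T_{ij}^{(k)}$, where
\[
T_{ij}^{(k)}\;:=\;\frac{(k-i)y_j-(k-j)y_i}{j-i},\qquad i<j<k,
\]
is the height at $x=k$ of the line through $p_i$ and $p_j$. Thus the number of distinct extension order types obtainable by varying $y_k\in\{1,\dots,n^\alpha\}$ equals the number of cells in the partition of $[1,n^\alpha]$ by these $\binom{k-1}{2}$ thresholds which contain at least one integer. There are at most $\binom{k-1}{2}+1=O(k^2)$ cells in total, so any per-step bound of $\Omega(k^2/\mathrm{polylog}\,k)$ is sharp up to polylog factors.

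The core combinatorial estimate is that one can choose the $y_i$ sequentially so that, at every step $k$, at least $\Omega(k^2/\mathrm{polylog}\,k)$ of the cells contain a grid point. Granting this, different sequences of cell-choices yield different labeled order types (the final order type restricts to the prefix order type, which forces the cell at each step), so multiplying over $k$ gives
\[
g(n,\alpha)\;\ge\;\prod_{k=3}^{n}\Omega\!\left(\frac{k^2}{\mathrm{polylog}\,k}\right)\;=\;\exp\bigl(2n\log n-O(n\log\log n)\bigr),
\]
which is the bound claimed in the theorem.

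The estimate itself rests on a separation bound for the thresholds. Each $T_{ij}^{(k)}$ is a rational with denominator $j-i\le k$, so two distinct thresholds either coincide or differ by at least $1/k^2$; and their typical spacing in $[1,n^\alpha]$ is $\Theta(n^\alpha/k^2)\ge n^{\alpha-2}$, which for $\alpha>2$ is much larger than $1$. The main technical obstacle is that an adversarial prefix $(y_1,\dots,y_{k-1})$ could collapse many thresholds near a single value, leaving only $O(1)$ grid-containing cells at the next step; to rule this out one selects each $y_i$ probabilistically from the valid grid points at step $i$ and shows that, with high probability, the thresholds remain spread on the scale of $n^{\alpha-2}$ throughout the induction. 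The polylog loss per step (equivalently, the $\log\log n$ loss in the final exponent) absorbs the fraction of prefixes that must be discarded to maintain this separation, and the heart of the proof is the careful bookkeeping which shows this probabilistic argument survives all $n$ steps simultaneously.
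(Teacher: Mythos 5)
Your overall scheme---iterate over the points, lower-bound the number of choices at each step that yield distinct labeled order types, and multiply---is the same counting strategy the paper uses, and your reduction of the step-$k$ count to the number of integer-containing cells cut out of $\{1,\dots,n^\alpha\}$ by the thresholds $T_{ij}^{(k)}$ is correct. But there is a genuine gap exactly where you locate ``the heart of the proof'': you never establish that $\Omega(k^2/\mathrm{polylog}\,k)$ of those cells contain a grid point, for every $k$ simultaneously. The separation facts you cite do not suffice. That distinct thresholds differ by at least $1/k^2$ says nothing about how many of the $O(k^2)$ gaps have length at least $1$; an adversarial prefix can cluster almost all thresholds into an interval of length $O(1)$, leaving only $O(1)$ usable cells, and the ``typical spacing $\Theta(n^{\alpha-2})$'' is only an average. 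The proposed fix---choose each $y_i$ at random among the currently valid grid points and argue that with high probability the thresholds stay spread at scale $n^{\alpha-2}$---is precisely the missing lemma, and it is not routine: the $\binom{k-1}{2}$ thresholds at step $k$ are highly correlated functions of only $k-1$ values, the randomness at step $i$ is conditioned on having survived all earlier steps, and the anti-concentration must hold for all $n$ steps at once. Moreover, a counting argument needs more than a single good sequence: every one of the $\Omega(k^2/\mathrm{polylog}\,k)$ choices you count at step $k$ must itself lead to a prefix that still admits many choices later, and nothing in the writeup controls this branching or explains why only a polylogarithmic factor is lost per step.

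The paper sidesteps this difficulty by making the relevant line arrangement deterministic from the start: it places a rigid subconfiguration $\mathcal{C}$ (four affine copies of Erd\H{o}s's parabola set $Q_p$ with $p$ the smallest prime above $n/(4\log n)$, arranged in a cross) whose chosen connecting lines form a rectangular grid, and inscribes in each grid cell an axis-parallel square containing $32pk^2\times 32pk^2$ lattice points, where $k=\lceil \log n\rceil$. A square can only die after at least $32pk^2$ of the fewer than $n^2/2$ lines through placed pairs cross it, and each such line crosses fewer than $2p$ squares, so at least $\Omega(p^2)=\Omega(n^2/\log^2 n)$ squares remain alive at every step; the product over the remaining $n-4p$ steps gives the claimed bound, and the whole configuration fits in a grid of side $\Theta(p^2k^2)=\Theta(n^2)$. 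All the quantitative control lives in the fixed seed configuration, so no probabilistic or adversarial analysis is needed. To rescue your route you would have to actually prove the anti-concentration lemma; otherwise the cleaner repair is to seed the construction with a deterministic gadget guaranteeing many large empty cells, as the paper does.
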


\begin{theorem}\label{thm:a2.5}
 If $\alpha \ge 2.5$ then \[g(n,\alpha) \ge \exp(3n \log n-O(n \log \log n)). \]
\end{theorem}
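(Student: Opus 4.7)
The plan is to adapt the Goodman--Pollack iterative cell-counting argument---which yields $f(n)\ge \exp(4(1+o(1))n\log n)$ in the unconstrained setting---to the integer grid $[0,n^{2.5}]^2$. In the unconstrained setting, placing the $(i+1)$-th point in different cells of the line arrangement spanned by the first $i$ points yields $\Omega(i^4)$ new labeled order types per step. In our setting, only cells containing at least one integer grid point are useful, and I aim to show that the number of such cells is at least $\Omega(i^3/\log i)$ at every step, which already suffices.

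I would start from a base configuration obtained from the Erd\H{o}s set $Q_p$ for a prime $p$ with $n\le p<2n$ (Bertrand's postulate), rescaled by a factor of $n^{1.5}$ so that its $p$ points lie in general position inside the $n^{2.5}\times n^{2.5}$ grid, each one occupying its own $n^{1.5}\times n^{1.5}$ integer neighborhood. Proceeding iteratively, after $i$ points have been placed in general position at integer coordinates, the $\binom{i}{2}$ lines through pairs of these points partition the plane into $\Omega(i^4)$ cells by Zaslavsky's formula. Any two placements of the $(i+1)$-th point in distinct cells yield distinct labeled order types, so writing $c_i$ for the number of such cells that contain an integer grid point, a bound $c_i=\Omega(i^3/\log i)$ for every $i\le n$ would give
\[
g(n,2.5)\ \ge\ \prod_{i=1}^{n} c_i\ \ge\ \frac{(n!)^3}{\prod_{i=3}^{n}\log i}\ \ge\ \exp\!\big(3n\log n - O(n\log\log n)\big),
\]
by Stirling together with $\sum_{i=3}^{n}\log\log i = O(n\log\log n)$ (a bounded initial segment of indices is absorbed into the $O$, and the slack from $p\le 2n$ contributes only another $O(n)$ term).

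The main obstacle is precisely this bound $c_i=\Omega(i^3/\log i)$. Naively, the grid has $n^5$ integer points distributed among $\Omega(i^4)$ arrangement cells, so the average cell contains $\Omega(n^5/i^4)\ge \Omega(n)$ integer points for $i\le n$, well above one; but this average-case estimate is spoiled if many integer points lie on the arrangement lines, or if a few huge cells absorb most of them. I would address this by two complementary inputs: (i)~a Szemer\'edi--Trotter-type incidence bound to show that only a negligible fraction of the $n^5$ integer points lies on the $\binom{i}{2}$ arrangement lines, so that $(1-o(1))n^5$ integer points sit in cell interiors; and (ii)~using the arithmetic structure of the rescaled $Q_p$ (in particular, that its spanning lines have distinct equations and slopes forming a full arithmetic progression $\{1,\dots,2p-3\}$) to argue that no single cell of the arrangement absorbs more than $O((n^5/i^3)\log i)$ interior integer points. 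Combining (i) and (ii) by pigeonhole yields $\Omega(i^3/\log i)$ populated cells at each step, as required. The delicate part, and likely the source of the $\log i$ loss (and hence the $O(n\log\log n)$ error term in the theorem), is step (ii): bounding the maximal integer-point content of a cell, rather than merely the average, with constants depending only on the fixed structure of $Q_p$.
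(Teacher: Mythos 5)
Your plan replaces the paper's argument with a direct count of grid-populated cells in the full arrangement of the $\binom{i}{2}$ lines, and the entire weight of the proof then rests on the claim $c_i=\Omega(i^3/\log i)$, which you yourself flag as the main obstacle and do not prove. The pigeonhole you propose requires an upper bound of $O\!\left(n^5\log i/i^3\right)$ on the number of interior grid points in \emph{every} cell of the arrangement, i.e.\ a worst-case bound that for $i=\Theta(n)$ is only a factor $n\log n$ above the average cell content $n^5/i^4$. Nothing in the sketch supports this: after the first step, most of the arrangement lines pass through the greedily placed points, not through points of $Q_p$, so the ``arithmetic structure of the rescaled $Q_p$'' cannot control the cell geometry; and cells of a line arrangement are notoriously non-uniform (most of the $\Theta(i^4)$ cells guaranteed by Zaslavsky cluster near intersection points and can be far smaller than one grid cell, while a few cells can be enormous). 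Step (ii) is therefore not a delicate technical point but an unproven geometric statement that is at least as hard as the theorem itself. There is also an internal inconsistency in the setup: with $n\le p<2n$ the base configuration already consumes at least $n$ points, leaving nothing to place iteratively, and the alternative reading (each of the $n$ points confined to its own $n^{1.5}\times n^{1.5}$ box around a point of $Q_p$) is not developed into a cell count.

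For contrast, the paper sidesteps the max-cell problem entirely. It fixes in advance a family $T$ of $\Theta(m^3)$ congruent squares of side $\ell=L/m$ (with $m=\Theta(n/\log n)$, $L=\Theta(n^2/m^{1/2})$), placed near selected crossings of chords of a regular $m$-gon inside an annulus, so that distinct squares are separated by chosen lines and hence placing a point in distinct live squares forces distinct order types. The key geometric input is Lemma~\ref{lem:squaresIntersected}: a single line meets at most $\sqrt{m}$ of the triangles $\triangle_i$ (because the squares live in a thin annulus) and at most $m/8$ squares per triangle, hence $O(m^{3/2})$ squares of $T$ in total. A square dies only after being crossed by at least $\ell$ lines, so a budget count shows $\Omega(m^3)=\Omega(n^3/\log^3 n)$ squares stay alive at every step, with only elementary geometry. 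If you want to salvage your route, you would need to either prove the worst-case cell bound (which I do not see how to do) or, as the paper does, restrict attention to a pre-committed, well-separated family of regions for which a per-line crossing bound can be proved directly.
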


We have the following upper bounds.
Note that there are at most $n^{2n}=\exp(2n\log n)$
different sets of $n$ points in an $n \times n$ integer grid. 
Thus \[g(n,1) \le \exp(2n \log n).\]
By using the point sets found in~\cite{exponential}, one can produce many point sets whose order types 
cannot be realized in an integer grid of size $n^{\alpha}.$ Let $P$ be a point set of  $\log \left (  \alpha \log n \right )$
points whose order type cannot be realized with integer coordinates smaller than $n^\alpha$.
Consider $P \cup Q$,  where $Q$ is any point set of $n- \log \left (  \alpha \log n \right )$
points such that $P \cup Q$ is in general position; note that $P \cup Q$ cannot be realized with integer coordinates smaller than $n^\alpha$.
Therefore, for every $\alpha >0$, there are at least 
\[f\left (n-\frac{\log \left ( \alpha \log n \right )}{c_1} \right)\] 
realizable order types of $n$ points but not realizable in integer grids of size $n^\alpha$.

\section{Lower Bound Constructions}\label{sec:cons}

In this section we prove Theorems~\ref{thm:a2} and~\ref{thm:a2.5}; we present two constructions
that produce many point sets with different order types 
in integer grids of size $n^{\alpha}$ for $\alpha >2$ and $\alpha \ge 2.5$, respectively. Our approach is similar to the one used 
to lower bound $f(n)$: we iteratively place points and lower bound the number of different
available choices that produce different
order types. With the caveat that if we now consider the line arrangement spanned by the straight
lines passing through pairs of already placed points, a given cell might not contain a grid point. 

To work around this problem, we do the following.
We place a portion of our points in a special configuration $\mathcal{C}$; and choose a set of straight lines
passing through pairs of points in $\mathcal{C}$. Then, we define a set $T$ of isothethic squares of side length  equal to $\ell$ 
such that any two squares are separated by one of our chosen lines. Afterwards, we  place the remaining points. 
This is done as follows. At each step we first choose a square  from $T$ that
\begin{itemize}
 \item[$(1)$]  has not been chosen before; and
 \item[$(2)$]  contains a point $p$ of integer coordinates that does not produce
 a triple of collinear points with the previously placed points.
\end{itemize}
We then choose $p$ as our  next point. 

Our strategy is to lower bound, at each step, the number of squares in $T$ that satisfy $(1)$ and $(2)$.
We say that these squares are \emph{alive}; otherwise, we say that they are \emph{dead}.
Suppose that a square of $T$
that has not been chosen yet. If less than $\ell$
lines passing through a pair of previously placed points intersect this square, then
it is still alive. In what follows, we use this observation extensively.

\subsection{Cross Configuration}

Let $n$ be an arbitrarily large positive integer and let $p$ be the smallest prime greater than $n/ 4\log n$. 
 In this case the configuration
$\mathcal{C}$ consists of four sets $\mathcal{U}$, $\mathcal{L}$, $\mathcal{R}$ and $\mathcal{D}$; each set is an affine copy of $Q_p$. 
$\mathcal{L}$ and $\mathcal{R}$ are rotated by $\ang{90}$ and stretched vertically. $\mathcal{U}$ and $\mathcal{D}$ are stretched horizontally.
$\mathcal{L}$ and $\mathcal{R}$ are placed at the same height, with $\mathcal{L}$ to the left of $\mathcal{R}$;
$\mathcal{U}$ and $\mathcal{D}$ are placed
at the same $x$-coordinate and between $\mathcal{L}$ and $\mathcal{R}$; $\mathcal{U}$ is above $\mathcal{L}\cup \mathcal{R} \cup \mathcal{D}$
and $\mathcal{D}$ is below $\mathcal{L}\cup \mathcal{R} \cup \mathcal{U}$. 
Every point in $\mathcal{U}$ is joined with a straight line
with the point in $\mathcal{D}$ with the same $x$-coordinate; 
every point in $\mathcal{L}$ is joined with a straight line
with the point in $\mathcal{R}$ with the same $y$-coordinate. These
are our chosen set of lines. See Figure~\ref{fig:grid}.
Let $k:=\lceil \log n \rceil$. The precise definitions are 
\begin{align}
  \mathcal{U} & := \{(i\cdot 34p\cdot k^2, (34\cdot (i^2 \mod p))\cdot k^2): 0 \le i < p\}, \nonumber \\
  \mathcal{L} & := \{((34\cdot (i^2 \mod p)-136p^2) \cdot k^2, (i\cdot 68p-238p^2)\cdot k^2): 0 \le i < p\}, \nonumber \\
  \mathcal{R} & :=  \{((34\cdot (i^2 \mod p)+153p^2) \cdot k^2, (i\cdot 68p-238p^2)\cdot k^2): 0 \le i < p\}, \nonumber \\
  \mathcal{D} & := \{(i\cdot 34p\cdot k^2, (34\cdot (i^2 \mod p)-408p^2)\cdot k^2): 1 \le 0 < p\} \textrm{ and } \nonumber \\
  \mathcal{C} & :=\mathcal{U} \cup \mathcal{L} \cup \mathcal{R} \cup \mathcal{D}. \nonumber
\end{align}

\begin{figure}
\begin{center}
\includegraphics[width=0.5\textwidth]{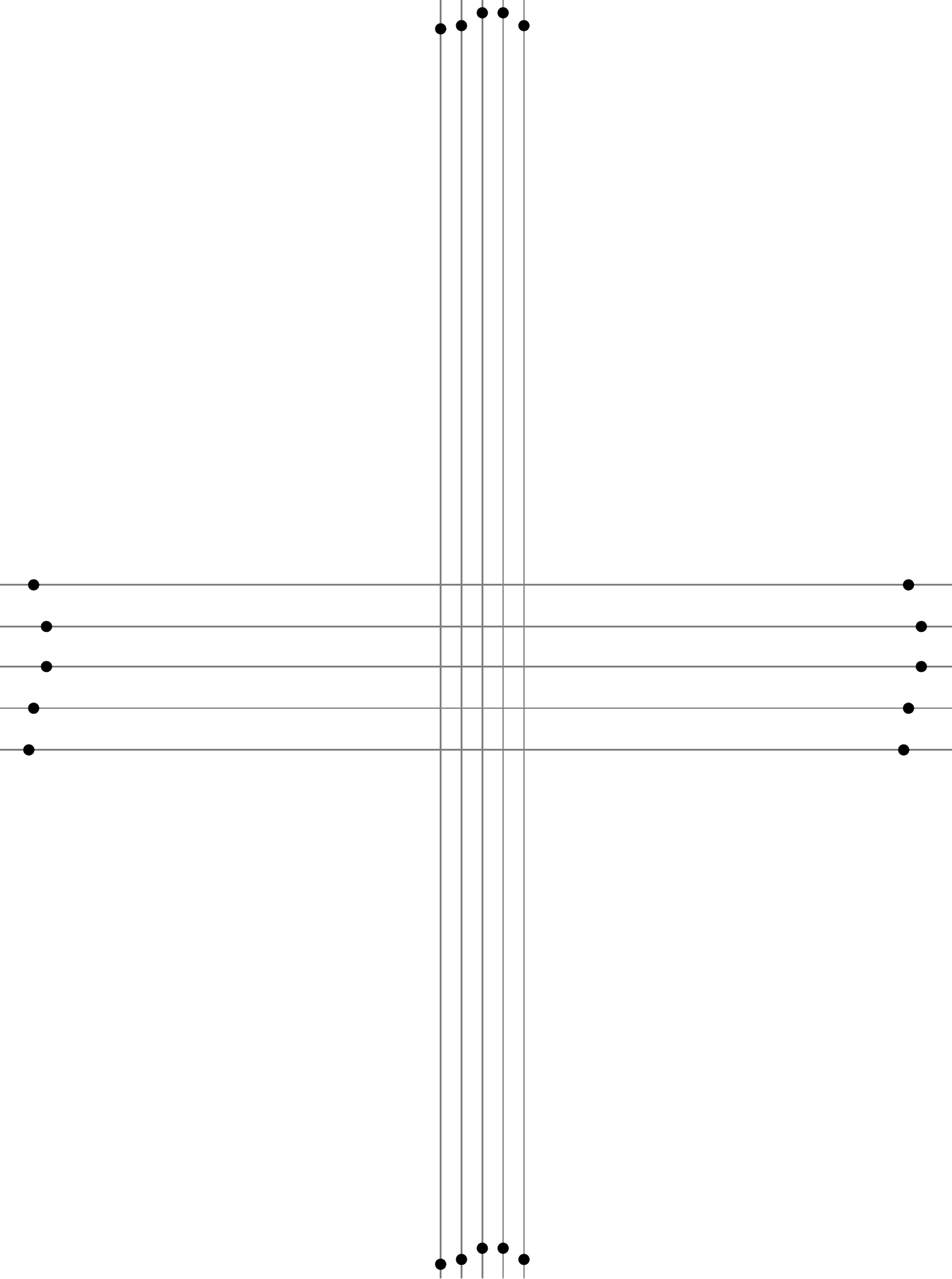}
\end{center}
\caption{Cross configuration for $p=5$}\label{fig:grid}
\label{fig:T}
\end{figure}

Simple (but tedious) arithmetic shows that $\mathcal{C}$ is in general position.
The set of chosen straight lines form a rectangular grid.  In the interior of each of these rectangles place an isothethic square 
with $32pk^2 \times 32pk^2$ integer grid points.
Let $T$ be the set of these squares. 
Baker, Harman and Pintz \cite{primegaps} showed that the interval $[x, x+x^{21/40}]$ contains a prime number, for $x$ sufficiently large. Thus,
$p = n/4 \log n + O(n^{21/40})$. Therefore $|\mathcal{C}|=n/\log n+ O(n^{21/40})$, $|T|=(p-1)^2$ and $\ell=32pk^2$ for this construction.

We now iteratively place the remaining $n-4p$ points. At each stage the number of lines passing through a pair of the so far placed
points is less than $n^2/2$; each of these lines intersects
less than $2p$ squares of $T$; each square must touched by at least
$32pk^2$ straight lines before being dead. Therefore, the number of alive squares at every stage
is at least
\[(p-1)^2-\frac{n^2p}{32pk^2}=\frac{1}{2}p^2-O(p) \ge \frac{n^2}{32 \log^2 n}.\]
where the last inequality holds for sufficiently large $n$.

Therefore, we obtain at least 
\[\prod_{i=1}^{n-4p} \frac{n^2}{32 \log^2 n}=\frac{n^{2(n-4p)}}{(32 \log^2 n)^{n-4p}}=\exp\left( 2n \log n-O(n\log \log n)\right)\]
different order types with this procedure. 
Since $\mathcal{C}$ is contained in an integer grid of side length equal to $\Theta(p^2 k^2)=\Theta(n^2)$, this proves Theorem~\ref{thm:a2}.

\subsection{Regular Polygon Configuration}
Let $n$ be an arbitrarily large positive integer; let $m$ be the smallest multiple of $16$ larger than $n/\log n$ and let $L:=\lceil 64 n^2/m^{1/2}\rceil$. 
Let $\mathcal{C}:=\{v_0,\dots, v_{m-1}\}$ be the vertices, in clockwise order,
of a regular polygon $P$ of side length equal to $L$. These points may not have integer coordinates; their coordinates
will be rounded up to the nearest integer later on.
Let $q^*$ be the center of this polygon. For $1 \le i \le m$,  let $\triangle_i$ be the triangle with vertices
$v_{i-1}, v_{i},$ and  $q^*$. In what follows we define a set $T_i$ of squares 
inside $\triangle_i$.

\begin{figure}[h]
	\begin{center}
		\includegraphics[width=0.75\textwidth]{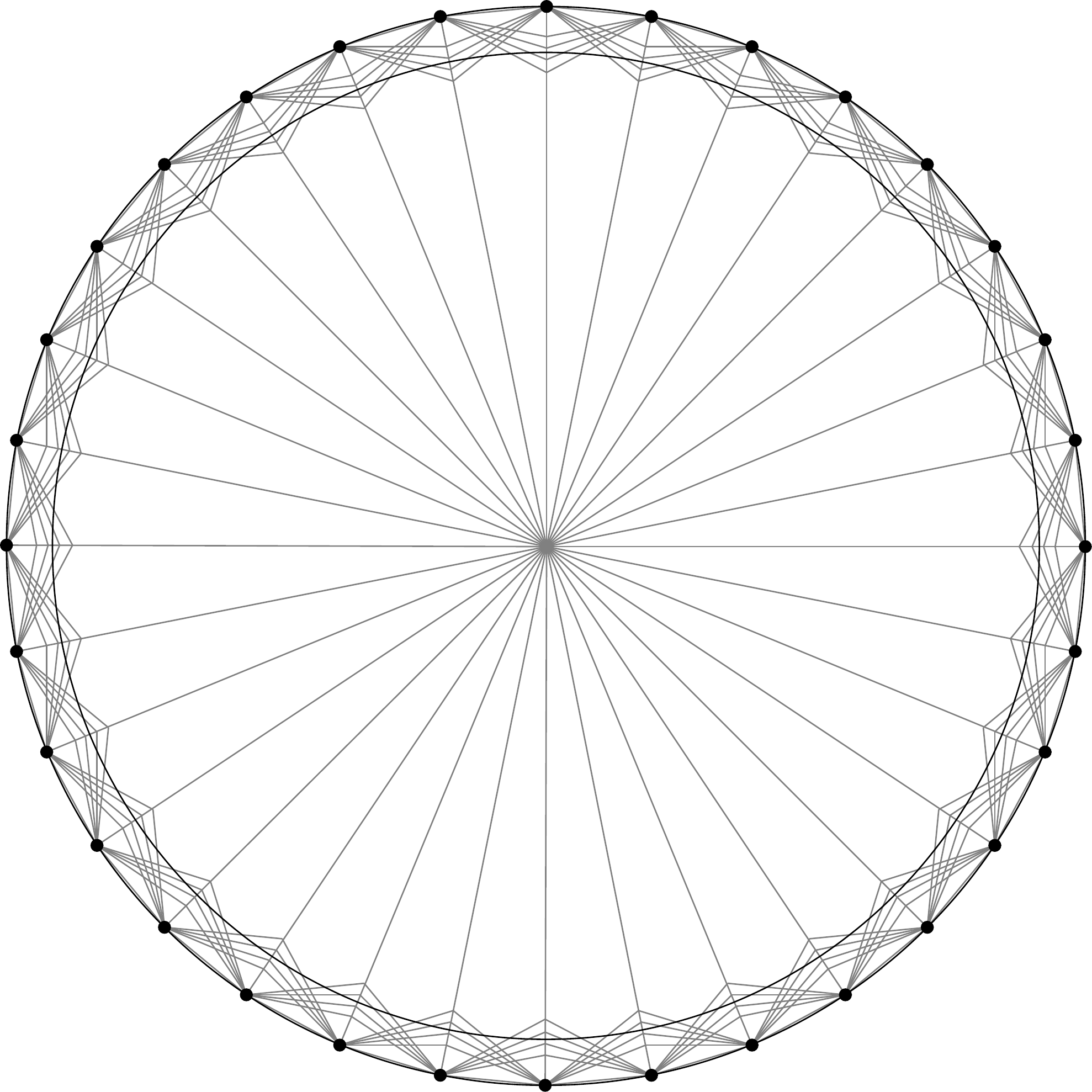}
	\end{center}
	\caption{The regular polygon construction with $m=32$}\label{fig:construction}
\end{figure}

Starting at the line segment joining $v_{i-1}$ and $v_{i}$, 
let $e_1,\dots,e_{m-1}$ be the line segments joining $v_{i-1}$ and every other vertex of $P$, sorted clockwise by angle around $v_{i-1}$.
Starting at the line segment joining $v_{i}$ and $v_{i-1}$, 
let $f_1,\dots,f_{m-1}$ be the line segments joining $v_i$ and every other vertex of $P$, sorted counterclockwise by angle around $v_i$.
Let $C_1$ be the the circumcircle of $P$. Since
every pair of consecutive vertices of $P$ defines a chord of
$C_1$, and these chords have the same length, the angle between any two consecutive $e_j$ and $e_{j+1}$
is the same. Let $\gamma$ be this angle. Moreover, the angle between any two consecutive $f_j$ and $f_{j+1}$ is also equal to $\gamma$; note that 
 \[\gamma=\frac{1}{m}\pi.\] For  indices $2 \le j \le m/2$ and $2 \le k \le m/2$, let $p_{j,k}$ be the intersection of $e_j$ and $f_k$; 
 note that $p_{j,k}$ is contained in $\Delta_i$. 
 Let \[Q:=\left \{p_{j,k}: j,k \textrm{ even and } \frac{m}{8} \le j,k < \frac{m}{4} \right \}.\] Note
 that \[|Q| = \frac{m^2}{256}.\]See Figure~\ref{fig:construction}. 
 For each $p_{j,k}$ in $Q$, place an isothethic square
of side length equal to 
\[\ell:=\frac{L}{m}\] 
centered at $p_{j,k}$.  Let $T_i$ be the set of these squares. 
The next lemma shows that the squares in $T_i$ are well separated by the $e_j$'s and $f_k$'s
%

\begin{figure}[ht]
	\begin{center}
		\includegraphics[page=1, width=.5\textwidth]{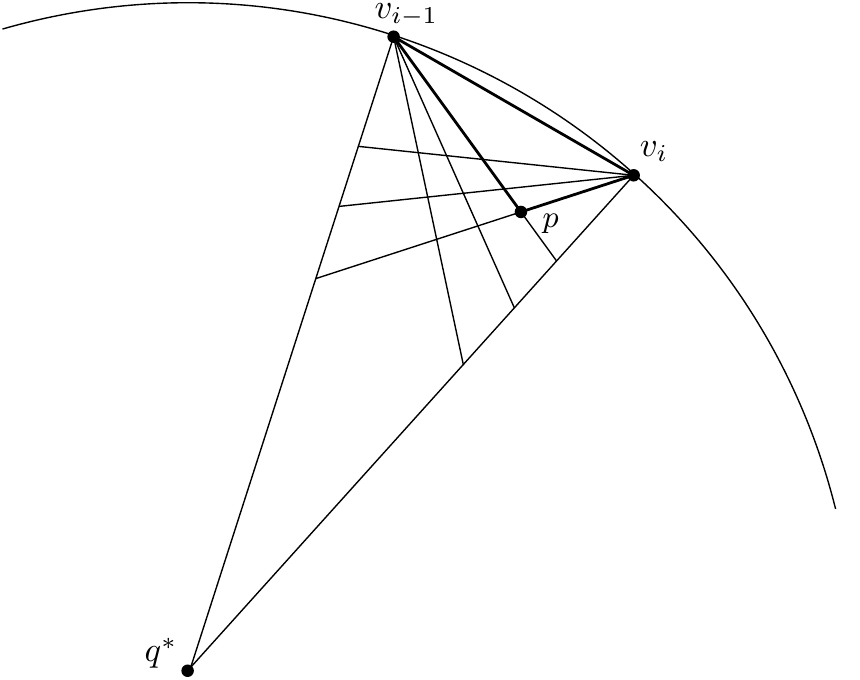}
	\end{center}
	\caption{The proof of Lemma \ref{lem:separated}}\label{fig:zoom1}
\end{figure}

\begin{lemma}\label{lem:separated}
Let $p_{j,k}$ be a point in $Q$. Then the distances from $p_{j,k}$ to $e_{j-1}$, $e_{j+1}$, $f_{k-1}$ and $f_{k+1}$
 are greater than
 \[\left (\sqrt{2}-1 \right )\pi \ell+O \left ( \frac{\ell}{m^2} \right ).\]
\end{lemma}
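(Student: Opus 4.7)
The plan is to reduce the distance estimate to a law-of-sines computation in the small triangle with vertices $v_{i-1}$, $v_i$, and $p_{j,k}$. The key geometric observation is that since $p_{j,k}$ lies on the ray $e_j$ emanating from $v_{i-1}$ and $e_{j\pm 1}$ is another ray from $v_{i-1}$ meeting $e_j$ at angle $\gamma$, the distance from $p_{j,k}$ to $e_{j\pm 1}$ is exactly $|v_{i-1}p_{j,k}|\sin\gamma$; the symmetric statement at $v_i$ gives $|v_ip_{j,k}|\sin\gamma$ for the distance to $f_{k\pm 1}$. So the whole problem reduces to lower bounding $|v_{i-1}p_{j,k}|$ and $|v_ip_{j,k}|$.

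First I would read off the interior angles of the triangle $v_{i-1}v_ip_{j,k}$. By the inscribed-angle fact that makes consecutive $e_j$'s (resp.\ $f_k$'s) differ by exactly $\gamma$, the angle at $v_{i-1}$ is $j\gamma$ and the angle at $v_i$ is $k\gamma$ (up to a harmless index shift). The law of sines applied to the side $|v_{i-1}v_i|=L$ then yields
\[
|v_{i-1}p_{j,k}| = L\,\frac{\sin(k\gamma)}{\sin((j+k)\gamma)}, \qquad |v_ip_{j,k}| = L\,\frac{\sin(j\gamma)}{\sin((j+k)\gamma)}.
\]

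The analytic heart of the argument is then to lower bound $\sin(k\gamma)/\sin((j+k)\gamma)$ over the admissible range. Setting $x = j\gamma$ and $y = k\gamma$, the constraints $m/8 \le j,k < m/4$ give $x, y \in [\pi/8, \pi/4)$ and $x+y \in [\pi/4, \pi/2)$. A one-line partial derivative check shows that on this region $\sin(y)/\sin(x+y)$ is strictly decreasing in $x$ and strictly increasing in $y$, so its infimum is attained in the limit $(x,y)\to(\pi/4, \pi/8)$, where it equals $\sin(\pi/8)/\sin(3\pi/8) = \tan(\pi/8) = \sqrt{2}-1$. This yields the strict inequality $|v_{i-1}p_{j,k}| > (\sqrt{2}-1)L$, and by the symmetric roles of $j$ and $k$ the same bound holds for $|v_ip_{j,k}|$.

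To finish, I would combine these bounds with the expansion $\sin\gamma = \sin(\pi/m) = \pi/m + O(1/m^3)$ and $\ell = L/m$ to conclude
\[
|v_{i-1}p_{j,k}|\sin\gamma > (\sqrt{2}-1)L\sin(\pi/m) = (\sqrt{2}-1)\pi\ell + O(\ell/m^2),
\]
and likewise for the $f_{k\pm 1}$ distances. The main obstacle, as I see it, is the trigonometric minimization in the middle step: one must verify the correct monotonicities over the admissible square and then recognize the identity $\tan(\pi/8) = \sqrt{2}-1$ that produces the constant appearing in the lemma. Everything else is routine bookkeeping, and the $O(\ell/m^2)$ error appears naturally from the cubic correction in the Taylor expansion of $\sin(\pi/m)$.
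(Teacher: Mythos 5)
Your proof is correct and takes essentially the same route as the paper's: both reduce to the law of sines in the triangle $v_{i-1}v_i p_{j,k}$, identify the extremal index pair on the boundary of the range $\frac{m}{8}\le j,k<\frac{m}{4}$, recognize $\sin(\pi/8)/\sin(3\pi/8)=\tan(\pi/8)=\sqrt{2}-1$, and then multiply by the small angle $\gamma=\pi/m$ to get $(\sqrt{2}-1)\pi\ell$ up to lower-order terms. The only differences are cosmetic: you locate the extremal point via an explicit monotonicity check and use the exact perpendicular factor $\sin\gamma$, whereas the paper directly names the closest point $p_{m/8,\,m/4-1}$ and uses $\tan\gamma$.
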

\begin{proof}
 We show that the distances from $p_{j,k}$ to $e_{j-1}$ and $e_{j+1}$ are at least the required value.
 The proof for $f_{k-1}$ and $f_{k+1}$ is similar. Note that among the $p_{j,k}$'s in $Q$, 
 \[p:=p_{m/8,  m/4-1}\] is the point
 closest to $v_{i}$. Consider the triangle with vertices $v_{i}, v_{i-1}$ and $p$ (see Figure \ref{fig:zoom1}). By the law
 of sines the distance from $p$ to $v_{i}$ is equal to 
 \[\frac{\sin(\pi/8)}{\sin(5\pi/8+\pi/m)}L > \frac{\sin(\pi/8)}{\sin(5\pi/8)}L= \left (\sqrt{2}-1 \right )L.\]
 Therefore, the distances from $p_{j,k}$ to $e_{j-1}$ and $e_{j+1}$ are at least  
 $\tan(\gamma) \cdot (\sqrt{2}-1)L$. The result follows from the facts that Maclaurin series
 of $\tan(x)$ is equal to $x+O(x^3)$ and that $\gamma=\pi/m$.
\end{proof}

We are ready to define the set of squares, let \[T:=\bigcup_{i=1}^m T_i.\] Let $C_2$ be the circle with center $q^*$ and passing through 
$p_{m/4, m/4}$. Note that $T$ is contained in the 
annulus $A$ bounded by $C_1$ and $C_2$.
The following lemma upper bounds the number of squares in $T$ that a given straight line can intersect.

\begin{figure}[ht]
	\begin{center}
		\includegraphics[page=2, width=.5\textwidth]{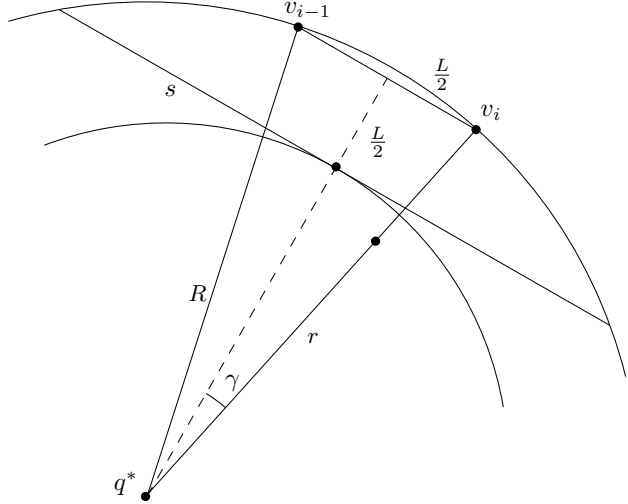}
	\end{center}
	\caption{The proof of Lemma \ref{lem:squaresIntersected}}\label{fig:zoom2}
\end{figure}

\begin{lemma}\label{lem:squaresIntersected}
 Every straight line intersects at most
 \[\frac{m^{3/2}}{4}\]
 squares of $T$. 
\end{lemma}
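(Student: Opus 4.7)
The plan is to bound the number of squares that $\mathcal{L}$ hits by a product of two separate estimates: (a) the number of triangles $\triangle_i$ that $\mathcal{L}$ can possibly visit, and (b) the number of squares of $T_i$ that $\mathcal{L}$ can meet inside a single such triangle. Since $T \subset A$, only the chord $\mathcal{L}\cap A$ contributes, so both estimates can be read off from the geometry of the annulus and Lemma~\ref{lem:separated}.

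For (a), viewed from $q^*$ the chord $\mathcal{L}\cap A$ subtends an angle of at most $2\arccos(R_2/R_1)$. Using $R_1 = L/(2\sin(\pi/m))$ and $R_2 = R_1\cos(\pi/m) - L/2$, one finds $1 - R_2/R_1 = \pi/m + O(m^{-2})$, so this angle is $2\sqrt{2\pi/m}(1+o(1))$. Dividing by the per-triangle angular width $2\pi/m$ at $q^*$, $\mathcal{L}$ visits at most $\sqrt{2m/\pi}(1+o(1))$ triangles.

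For (b), the squares of $T_i$ are centered at the pencil-intersection points $p_{j,k}$ for even $(j,k)\in[m/8,m/4)^2$, and there are $(m/16)^2$ of them. By Lemma~\ref{lem:separated}, each $p_{j,k}$ lies strictly inside the cell of the pencil-grid bounded by $e_{j-1}, e_{j+1}, f_{k-1}, f_{k+1}$; consequently distinct squares live in distinct cells of this grid. Inside $\triangle_i$ there are $m/16+1$ relevant odd-indexed $e$-lines and $m/16+1$ relevant odd-indexed $f$-lines, and $\mathcal{L}$ (being a straight line) crosses each of them at most once. Although the two pencils are not Cartesian, lines of the same pencil do not meet inside $\triangle_i$, so the standard traversal bound ``cells visited is at most boundary crossings plus one'' applies, giving at most $(m/16+1)+(m/16+1)+1 = m/8 + 3$ cells of the pencil-grid, hence at most $m/8 + O(1)$ squares of $T_i$.

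Combining (a) and (b), $\mathcal{L}$ meets at most $(\sqrt{2m/\pi}(1+o(1)))\cdot(m/8+O(1)) = \tfrac{\sqrt{2/\pi}}{8}\,m^{3/2} + o(m^{3/2})$ squares, and since $\sqrt{2/\pi}/8 \approx 0.1 < 1/4$, this is at most $m^{3/2}/4$ for $m$ sufficiently large. The main obstacle is step (b): one has to be sure that the two pencils really behave like a grid cell-wise, i.e.\ that no square can be hit without $\mathcal{L}$ having formally entered its cell. This is precisely what Lemma~\ref{lem:separated} delivers, by confining each square strictly to the interior of its cell and ruling out degenerate incidences. The chord-length and angular computations in (a) are routine but need the exact values of $R_1$ and $R_2$ to absorb the lower-order terms into the slack between $\sqrt{2/\pi}/8$ and $1/4$.
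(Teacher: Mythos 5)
Your proof is correct and follows essentially the same decomposition as the paper's: a bound of $O(\sqrt{m})$ on the triangles $\triangle_i$ visited (the paper compares the maximum length of a component of $\varphi\cap A$ with the minimum length of $s\cap\triangle_i$, you compare the subtended angle at $q^*$ with the sector angle $2\pi/m$ — both routine computations with the same radii $R$ and $r$), multiplied by at most $m/8+O(1)$ squares per triangle obtained by counting crossings of the odd-indexed pencil lines, which is exactly the paper's count and likewise leans on Lemma~\ref{lem:separated}. The only cosmetic difference is that the paper doubles a per-segment bound over the at most two components of $\varphi\cap A$ to land exactly on $m^{3/2}/4$, whereas you treat both components at once and absorb all lower-order terms into a smaller leading constant $\tfrac{\sqrt{2/\pi}}{8}<\tfrac14$, valid for $m$ sufficiently large.
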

\begin{proof}
Let $\varphi$ be a straight line. Note that $\varphi$ intersects $A$ in at most two straight line segments. We upper bound
the number of squares in $T$ that a straight line segment $s$ can intersect.
Each time $s$ intersects a square in  $T_i$, it must intersect and edge
$e_j$ or $f_k$; moreover, only half of these edges define a square in $T_i$. Therefore,  $s$ intersects at most $\frac{m}{8}$ squares in  $T_i$. 
We upper bound the number of the triangles $\triangle_i$'s that $s$ can
intersect. For this we upper bound the length of $s$.

Let $R$ and $r$ be the radius of the circles $C_1$ and $C_2$, respectively.
Note that $s$ has maximum length when it is tangent to $C_2$ and its endpoints are in $C_1$.
Therefore,
\[||s||\le 2 \sqrt{R^2-r^2}.\]
Since $C_2$ passes
through $p_{m/4, m/4}$, the distance from $C_2$ to the edge $v_i,v_{i-1}$
is equal to $L/2$. Since $P$ is a regular polygon $R=\frac{1}{2}L \csc(\pi/m)$ and its apotheme 
is equal to $R \cos (\pi /m)$ (see Figure \ref{fig:zoom2}). This implies that $r = \frac{1}{2}L(\cot(\pi/m)-1)$.
Therefore,  \[||s||\le 2 \sqrt{R^2-r^2} \le  \sqrt{2}\cdot L\sqrt{\cot \left  (\frac{\pi}{m}\right ) } \le \sqrt{\frac{2m}{\pi}}L-O\left (\frac{L}{m^{3/2}}\right );\]
the last term comes from the fact that the Maclaurin series of $\sqrt{\cot(x)}$ is equal to $\sqrt{\frac{1}{x}}-O(x^{3/2})$.

Now we lower bound the length of $s\cap \triangle_i$.
Note that $s\cap \triangle_i$ has minimum length when $s$ is tangent to $C_2$ and parallel to the edge $v_{i-1},v_{i}$.
Thus, 
\[||s\cap \triangle_i||\ge 2 \tan \left (\frac{\pi}{m} \right )r=\left ( 1-\tan \left (\frac{\pi}{m} \right)\right)L > \sqrt{\frac{2}{\pi}}L,\]
where the last term holds for sufficiently large $n$.
Therefore, $s$ intersects a most 
$\sqrt{m}$ of the triangles $\triangle_i$. The result follows.
\end{proof}

To end the construction we round the coordinates of the $v_i$'s to their nearest integer. Redefine
the $e_j$'s and $f_k$'s accordingly. By Lemma~\ref{lem:separated}, a square in $T_i$ centered at $p_{j,k}$
is separated from  edges $e_{j'}$ and $f_{k'}$ different $e_j$ and $f_k$ by a distance of at least
$\left (\sqrt{2}-1 \right ) \pi \ell$. The endpoints of the new $e_j$'s and $f_k$'s are 
at a distance of at most one of their original positions. Since $\left (\sqrt{2}-1 \right )\pi > 1$,
the squares in $T_i$ are still separated by the straight lines containing the $e_j$'s and $f_k$'s.

We now iteratively place  the remaining $n-m$ points. At every stage the number of lines passing
through every pair of the so far placed points is less than $n^2/2$; each of these lines intersects 
at most $m^{3/2}/4$ squares of $T$; each square must be touched by at least $\ell=L/m$ straight lines before
being dead. Thus, the number of squares alive at every stage is at least
\[\frac{m^3}{256}-\frac{n^2m^{5/2}}{8L} \ge \frac{m^3}{512} \ge \frac{n^3}{512 \log^3 n}.\]
Therefore, we obtain at least
\[\prod_{i=1}^{n-m} \frac{n^3}{512 \log^3 n}=\frac{n^{3(n-m)}}{(512 \log n)^{3(n-m)}}=\exp(3n \log n-O(n \log \log n))\]
different order types with this procedure. 
Recall that $m \le n / \log n+16$ and $L \le 64 n^2/m^{1/2}+1$.
Therefore, these point sets lie in an
integer grid of side length equal to $L\cdot m =\Theta \left (n^{2.5}/\sqrt{\log n} \right )$. This proves Theorem~\ref{thm:a2.5}.

\textbf{Acknowledgments.} 
This work was initiated at the \emph{Third Workshop on Geometry and Graphs}, held at the Bellairs
Research Institute, Barbados, 2015. We are grateful to the other workshop participants
for providing a stimulating research environment; in particular we thank Jean-Lou De Carufel and Stefanie Wuhrer for various helpful discussions.
This work was continued at the \emph{Winter School on Computational and Combinatorial Geometry at University of Havana},  Cuba, 2016.

\small
\bibliographystyle{abbrv} \bibliography{small}



\end{document}